
\documentclass{egpubl}
\usepackage{egsgp2025}
 
\ConferencePaper        
%
%
\SpecialIssuePaper         


\CGFccby

\usepackage[T1]{fontenc}
\usepackage{dfadobe}  

\usepackage{cite}  
\BibtexOrBiblatex
\electronicVersion
\PrintedOrElectronic

\ifpdf \usepackage[pdftex]{graphicx} \pdfcompresslevel=9
\else \usepackage[dvips]{graphicx} \fi

\usepackage{egweblnk} 

\usepackage{amssymb}
\usepackage{booktabs} 
\usepackage{amsmath}
\usepackage{xcolor}
\usepackage{amsfonts}
\usepackage{graphicx} 
\usepackage{wrapfig}
\usepackage{algpseudocode,algorithm,algorithmicx}
\usepackage{mathtools}

\graphicspath{ {./figures/} }

\renewcommand{\sec}[1]{Sec.~\ref{sec:#1}} 
\newcommand{\fig}[1]{Fig.~\ref{fig:#1}}
\newtheorem{theorem}{Theorem}[section]

\newtheorem{lemma}[theorem]{Lemma}
\newcommand{\new}[1]{#1}
\newtheorem{definition}{Definition}[section]
\newtheorem{proposition}[theorem]{Proposition}

\title[One-Shot Method for Computing Generalized Winding Numbers]%
      {One-Shot Method for Computing Generalized Winding Numbers}

\author[C. Martens \& M. Bessmeltsev]
{\parbox{\textwidth}{\centering C. Martens\orcid{0009-0005-6866-8159}
        and M. Bessmeltsev\orcid{0000-0002-8864-2934}}
        \\
{\parbox{\textwidth}{\centering Université de Montréal, Canada}
}
}
%

\begin{document}

\teaser{
\centering
    \includegraphics[width=1\linewidth]{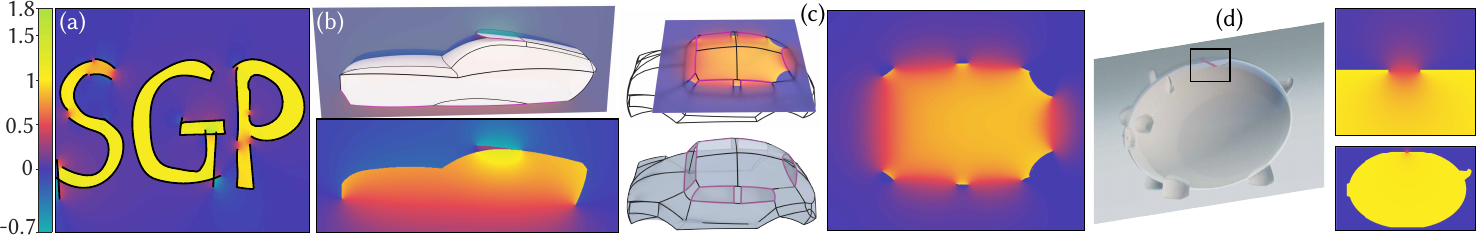}
    \vspace{-5pt}
    \caption{We propose a novel \emph{one-shot} method for generalized winding number computation that does not require discretization of the surface. Our method only uses the \emph{boundary} of the surface and a single ray-surface intersection test. This focus on the boundary allows us to compute winding numbers for 2D parametric curves (a) on a regular grid of query points significantly faster than the state of the art. In 3D, we can compute a winding number for a parametric surface (b, Coons patches, c, minimal surfaces) without discretizing it. For both parametric surfaces (b,c) and meshes (d), for some boundaries our method can improve the performance over standard methods while staying exact. Piggy Bank mesh with minor modifications by \href{https://www.thingiverse.com/thing:104734}{belch}.}
    \label{fig:teaser}
}

\maketitle
\begin{abstract}
   The generalized winding number is an essential part of the geometry processing toolkit, allowing to quantify how much a given point is inside a surface, even when the surface has boundaries and noise. We propose a new universal method to compute a generalized winding number, based only on the surface boundary and the intersections of a single ray with the surface, supporting any oriented surface representations that support a ray intersection query. Due to the focus on the boundary, our algorithm has a unique set of properties. For 2D parametric curves, \textit{on a regular grid of query points}, our method is up to \textbf{$4\times$} faster than the current state of the art, maintaining the same precision. In 3D, our method can compute a winding number of a surface without discretizing it, including parametric surfaces. For some meshes with many triangles and a simple boundary, our method is faster than the hierarchical evaluation of the generalized winding number while still being precise. Similarly, on some parametric surfaces with a simple boundary, our method can be faster than adaptive quadrature. We validate our algorithms theoretically, numerically, and by demonstrating a gallery of results on a variety of parametric surfaces and meshes, as well uses in a variety of applications, including voxelizations and boolean operations.
\begin{CCSXML}
	<ccs2012>
	<concept>
	<concept_id>10010147.10010371.10010396.10010402</concept_id>
	<concept_desc>Computing methodologies~Shape analysis</concept_desc>
	<concept_significance>500</concept_significance>
	</concept>
	<concept>
	<concept_id>10010147.10010371.10010396.10010399</concept_id>
	<concept_desc>Computing methodologies~Parametric curve and surface models</concept_desc>
	<concept_significance>500</concept_significance>
	</concept>
	</ccs2012>
\end{CCSXML}

\ccsdesc[500]{Computing methodologies~Shape analysis}
\ccsdesc[500]{Computing methodologies~Parametric curve and surface models}

\printccsdesc   
\end{abstract}  

\section{Introduction}
For meshes, parametric surfaces, and point clouds, even representing open, noisy, and non-manifold geometry, the notion of \emph{generalized winding number} captures \emph{how much} a point is inside the surface \cite{Jacobson-13-winding,barillFastWindingNumbers2018} (Fig.~\ref{fig:parametric_gaps}). Generalized winding numbers are used in a variety of applications, including tetrahedral meshing \cite{tetMeshingWild}, surface reconstruction \cite{xuGloballyConsistentNormal2023}, and medical imaging \cite{becciu3DWindingNumber2011}, to name a few. 

The standard methods of computing generalized winding numbers calculate the area of the surface projected onto a unit sphere. 
We propose an alternative method to compute the generalized winding number. Our main insight is that for any surface, with or without boundary, computing a generalized winding number at a point requires the surface \emph{boundary} only and intersections of the surface with a \emph{single} ray. We furthermore show that in typical scenarios, when processing multiple query points, one can use less than one ray per query point on average, significantly reducing time complexity.

Our method is generic and can be applied to various geometry representations, in 2D or 3D. The elegance lies in its universality: just like a standard ray tracing pipeline, our method is exactly the same for different surface representations, such as meshes or parametric surfaces, and encapsulates the representation completely within a ray-surface intersection test. Due to our method's focus on the boundary, the performance of our method does not depend much on the complexity of the surface and instead excels when the boundary is simple. For example in 2D, when the boundary of a set of parametric curves is just a set of points, our method, when run on a regular grid of query points is up to 4x faster than the state of the art. Similarly, for challenging parametric surfaces with a planar boundary (Fig.~\ref{fig:surf_revolution}), on a similar arrangement of query points, our method is significantly faster than the standard integration method, adaptive quadrature, while maintaining the same precision. For meshes with few boundary edges, in typical application scenarios such as voxelization, our method can be significantly faster than the hierarchical evaluation of the winding number \cite{Jacobson-13-winding}. While an approximation can be computed more efficiently \cite{barillFastWindingNumbers2018}, our computation is significantly more precise. The regular grid of query points allows us to reuse the same ray intersection for many points, saving computations. We additionally show that the key components of our method, including the ray-surface intersection, can be done exactly via a convex Sum-of-Squares formulation \cite{SOS}, making the computations exact if so desired.

We showcase our method on a wide variety of standard tasks for generalized winding numbers for all these representations (Sec.~\ref{sec:results}), in 2D and 3D, including parametric Coons patches and B\'ezier curves and triangles. We extensively validate our method by both formally showing it is equal to the generalized winding number and validating its numerical precision and performance in numerous experiments, as compared to the state of the art methods including \cite{Jacobson-13-winding,barillFastWindingNumbers2018,RobustContainmentQueries2024a} and adaptive quadrature (Sec.~\ref{sec:results}). An application of our method is to piecewise minimal surfaces on 3D curve networks, collections of 3D curves forming one or multiple loops (Fig.~\ref{fig:teaser}), typical for (VR/AR) contexts \cite{GoogleTiltBrush2023,gryaditskaya2020lifting} (Fig.~\ref{fig:motivation}). 
We apply our method to compute a winding number of piecewise minimal surfaces by expressing them in a parametric form via Boundary Element Method (BEM). Our method allows us to avoid discretizing these surfaces.

Our contributions are:

\begin{itemize}
	\item demonstrating that a computation of generalized winding number requires boundary-only operations and finding intersections of a surface with a \emph{single} ray and that the same ray can be reused for multiple query points,
	\item a new method of computing generalized winding number based on this observation, enabling state-of-the-art performance for 2D parametric geometry on a regular grid of query points, and showing performance improvements for some 3D parametric surfaces and meshes.
\end{itemize}

\begin{figure}
	\includegraphics[width=\linewidth]{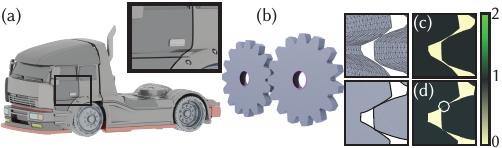}
	\caption{Just like meshes, parametric surfaces also often contain gaps and overlaps, either intentional for tight alignment after manufacturing (a) or unintentional due to modelling defects. At the same time, precision is paramount for parametric surfaces, especially when parts are designed to touch (b). Discretizing such surfaces is often undesired, as it may introduce new artifacts (c). Our method can compute winding numbers precisely without surface discretization (d). \href{https://grabcad.com/library/trailer-head-1}{Truck} by Toru Odazawa.}
	\label{fig:parametric_gaps}
\end{figure}

\section{Related Work}

\textbf{Winding Numbers.} The classical concept of a winding number, originally defined for a closed curve \cite{meisterGeneraliaGenesiFigurarum1769} and easily generalizable to closed surfaces, is one of the topics in complex analysis \cite{needhamVisualComplexAnalysis1997}. \cite{Jacobson-13-winding} generalized winding numbers to polylines and meshes, including the ones with boundary. They also propose an efficient algorithm to compute the generalized winding number that empirically has a time complexity of $O(F^{0.55})$, where $F$ is the number of triangles. Later, winding numbers have been further generalized to, among other representations, point clouds and triangle soups \cite{barillFastWindingNumbers2018}. The method of \cite{barillFastWindingNumbers2018} can also serve as a very efficient approximation for meshes, achieving $O(\log F)$. Under simple assumptions, on meshes our algorithm has a time complexity between those two, while still being exact (Sec.~\ref{sec:validation}). More importantly, these methods target discretized geometry; applying them to parametric surfaces leads to loss of precision (Fig.~\ref{fig:accuracy_err}). Our method can be applied to parametric surfaces directly yielding exact winding numbers. Finally, \cite{fengWindingNumbersDiscrete2023} extended the definition to curves on discrete surfaces.

Our method computes a generalized winding number of a surface in any representation that supports ray intersection queries, such as parametric surfaces or meshes. A recent work \cite{RobustContainmentQueries2024a} offers a practical method of finding a generalized winding number for a rational curve in 2D; their technique cannot be easily extended into 3D. In contrast, we propose a general method for any surfaces or curves supporting ray intersections, including parametric or discrete geometry in 2D and 3D. We compare with their method in 2D in Sec.~\ref{sec:results}.

A closely related concept is a solid angle, which for a surface is defined as generalized winding number (without normalization) modulo $4\pi$, so it does not measure the number of turns the surface makes around a point, nor does it change under a flip of surface orientation \cite{PerspectivesWindingNumbers2023}. We are focusing on the full winding number, the signed solid angle, that includes the multiplicity and depends on the orientation. The solid angle can be similarly defined and effectively computed for a space \emph{curve} \cite{solid_angle_curve,chern2024area,binyshMaxwellTheorySolid2018}. Note that in 3D we always compute a winding number of a \emph{surface}, albeit sometimes not discretized. \setlength{\columnsep}{1em}

\textbf{Ray Casting.} 
Our work is influenced by works on ray casting to voxelize shapes \cite{nooruddinSimplificationRepairPolygonal2003,houstonUnifiedApproachModeling}. The surfaces they focus on have no true boundary, albeit might be composed of multiple components due to noise, so they perform an inside-outside test; ours may have a boundary, so the definition of insidedness is not binary and is better captured by the notion of generalized winding number. 

\section{Background}
\label{sec:bg}

We start by defining a generalized winding number for a smooth oriented curve $C$ in 2D and an oriented surface $M$ in 3D, open or closed. \new{Both the theory and the method we propose apply equally to piecewise smooth curves or surfaces. We only use smoothness for the simplicity of exposition.} For brevity, later in the text we write `winding number' instead of `generalized winding number'. 

\begin{figure}
	\centering
	\includegraphics[width=0.6\linewidth]{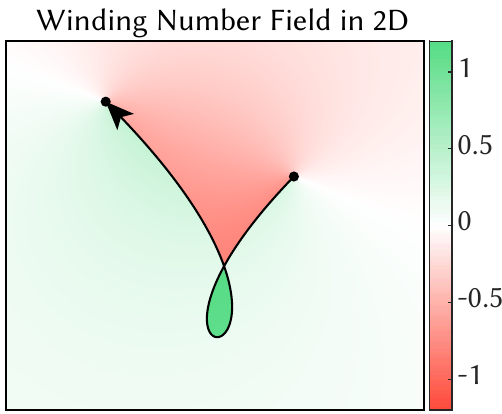}
	\caption{In two dimensions, the winding number is the sum of the signed subtended angles of the curve.}
	\label{fig:wn_2d}
\end{figure}

In 2D, a winding number at a point $p \in \mathbb{R}^2 \setminus C$ is defined via the subtended angle $\theta$ of the curve $C$:
\[
 w_C(p) = \frac{1}{2\pi}\int_C{d\theta(p)}.
\]

In 3D, for a given surface $M$, the winding number at a point $p\in\mathbb{R}^3~\setminus~M$ is defined as the integral of the differential solid angle $d\Omega(p)$ subtended at $p$ \cite{barillFastWindingNumbers2018}:
\[
w_M(p) = \frac{1}{4\pi}\int_M{d\Omega(p)}.
\]

In the article we focus on the 3D case for brevity; our algorithm works in both 2D and 3D, as we demonstrate in Fig.~\ref{fig:teaser} and Sec.~\ref{sec:results}. In both the theory and the algorithm we consider the winding number to be undefined when $p \in C$ or $p \in M$.


 An equivalent formulation of the winding number, which we use as the starting point for our method, is the number of signed intersections of rays with the surface over all directions \cite{Jacobson-13-winding}:
\begin{equation}
	w_M(p) = \frac{1}{4\pi}\int_{q \in S^2_p} \chi(q) dA,
	\label{eq:wn_chi}
\end{equation}
where $S^2_p$ is a unit sphere 
centered at $p$, and $\chi(q)$ is the number of signed intersections between ray $\overrightarrow{pq}$ and $M$. By `signed intersections' we mean that an intersection is counted as $+1$ when the ray is intersecting from the back of the surface, i.e., the ray is aligned with the surface normal, and as $-1$ when it is intersecting from the front. So formally,
\begin{equation}
	\chi(q) = \sum_j \mathrm{Sign}(\overrightarrow{pq} \cdot n(r_j)),
	\label{eq:chi}
\end{equation}
where $r_j$ are all the intersection points of a ray $\overrightarrow{pq}$ with the surface, and $n(r_j)$ are the corresponding normals.  We refer to $\chi$ as the number of signed intersections.
\section{Algorithm}
\label{sec:algo}

The input to our algorithm is an oriented manifold surface $M$ \new{of arbitrary genus (Fig.~\ref{fig:genus})}, with or without boundary $\partial M$. We first describe a generic algorithm, then clarify its implementation for parametric and discrete surfaces in Sec.~\ref{sec:impl_details}. 
\begin{figure}
	\includegraphics[width=\linewidth]{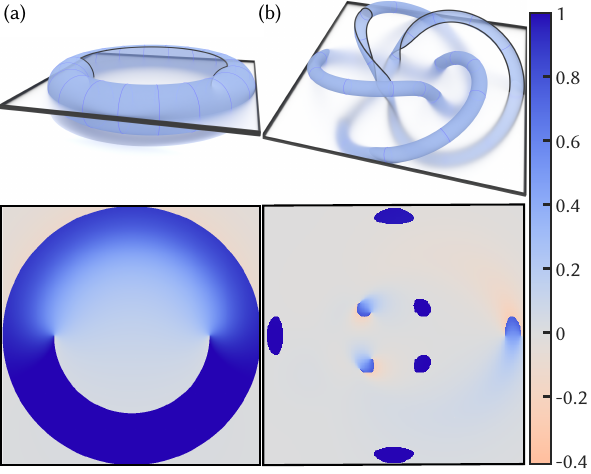}
	\caption{\new{Our method is compatible with oriented surfaces of arbitrary genus. We show a torus (a) with genus 1, and a (4,3)-torus knot with genus 3 (b). Surface boundaries are highlighted in black.}}
	\label{fig:genus}
\end{figure}

In this section we outline our basic algorithm that uses multiple rays for a query point. Later (Sec.~\ref{sec:one_shot}), we build upon it and describe a significantly faster \emph{one-shot} algorithm, which uses a single ray for multiple query points. 

\subsection{Computing the Winding Number}
Our goal is to compute the winding number at a point $p \in \mathbb{R}^3$, i.e., evaluate the integral in Eq.~\ref{eq:wn_chi} using the boundary $\partial M$. Computing $\chi(q)$ for a single $q$ includes finding all intersections of a ray $\overrightarrow{pq}$ with a potentially non-discretized surface --- a rather expensive operation (\sec{ray_int}). Therefore, evaluating the integral na\"ively is infeasible, as it requires many such ray-surface intersections. Instead, our main observation is the following lemma, a corollary of a more general and technical proposition we prove in \new{Appendix \ref{app:lemma_proof}}:

\begin{lemma}
 Let $M \subset \mathbb{R}^3$ be a smooth surface with a boundary $\partial M$. Let $S^2_p$ be a unit sphere centered at  $p \in \mathbb{R}^3$, and let the projection of $\partial M$ onto $S^2_p$ split the sphere into open regions $A_i$. Then for each $i$ there is a constant $\chi_i \in \mathbb{Z}$ such that $\chi(q)=\chi_i$ almost everywhere in each $A_i \ni q$. 
 \label{lem:main}
 \end{lemma}
 
 \begin{figure}
 	\includegraphics[width=\linewidth]{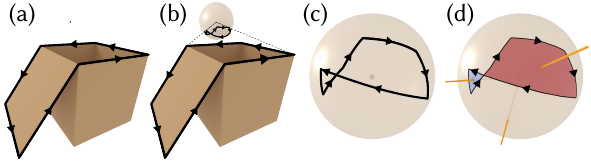}
 	\caption{Overview of the basic algorithm: Given a surface with a boundary (a), we project the boundary onto a unit sphere around the query point (b, blowup in c).  The boundary splits the surface of the sphere into regions (d). We shoot a single ray through each of those regions, and compute the number of signed intersections $\chi_i$ of each ray with the surface.}
 	\label{fig:overview}
 \end{figure}
 
\begin{figure}
	\includegraphics[width=\linewidth]{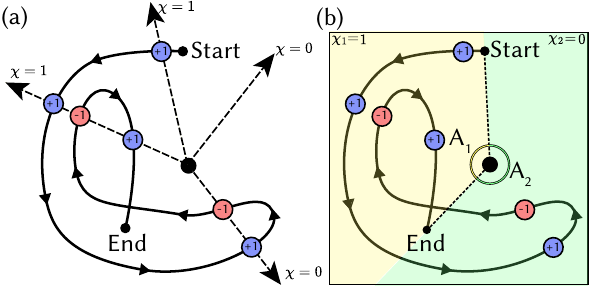}
	  \caption{We define $\chi$ for a ray as the number of signed intersections (Eq. \ref{eq:chi}) (a). The value of $\chi$ is constant in each region, $A_1$ and $A_2$, formed by projecting the curve endpoints onto a unit circle. Note that this holds even if the curve is self-intersecting.}
	\label{fig:signed_intersection}
\end{figure}

The lemma is illustrated in \fig{signed_intersection}. Intuitively, this means that $\chi$ is constant within each region, except maybe for a few isolated curves or points. In the figure, we show the 2D example: the surface $M$ becomes a curve, the boundary of $M$ is now a pair of endpoints; their projections onto the unit circle split it into two regions $A_1$ and $A_2$. The lemma states that even though the number of \emph{unsigned} intersections may change over each region $A_i$ on the sphere, the number of \emph{signed} intersections is a constant within each region (except for some special points, see below). We denote that constant as $\chi_i$. The projection of the boundary itself has measure zero, thus we do not need to define $\chi(q), q \in \textrm{Proj}_{S_p^2}\partial M$, as these values do not influence the integral. The lemma therefore allows us to rewrite Eq.~\ref{eq:wn_chi} as a discrete sum:

\begin{equation}
 w_M(p) = \frac{1}{4\pi} \sum_i \textrm{Area}_{S^2_p}(A_i) \chi_i.
 \label{eq:wn_discrete_sum}
\end{equation}

\setlength{\columnsep}{1em}
\begin{wrapfigure}[5]{r}{0.7in}
	\centering
	\vspace{-\intextsep}
	\includegraphics[width=0.7in]{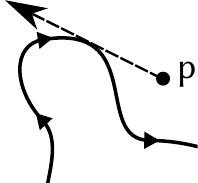}
\end{wrapfigure}

The technical detail \new{'}almost everywhere' in the formulation of the lemma refers to the points $q$ where the rays $\overrightarrow{pq}$ are tangent to the surface at a contact point (see inset). The set of such points, however, has measure zero, so they do not affect the integral.

The observation captured in this lemma significantly simplifies our algorithm. Now we can shoot one ray per region $A_i$, compute the number of signed intersections $\chi_i$, and sum their contributions multiplied by the region area $\textrm{Area}_{S^2_p}(A_i)$. Considering that number of regions is typically small, this leads to an efficient algorithm: project the boundary onto the sphere (\fig{overview}c), decompose the sphere\new{'s} surface into regions, compute their areas, shoot a single ray through each one and compute $\chi_i$ (\fig{overview}d). \new{Therefore, our method is compatible with all oriented surface representations that support ray-intersection queries and represent boundaries as continuous curves. Examples of unsupported representations include oriented point clouds and NeRFs \cite{NeRF21}, as they neither provide explicit ray-intersections nor represent boundaries as continuous curves.}

\begin{figure}
	\centering
	\includegraphics[width=0.8\linewidth]{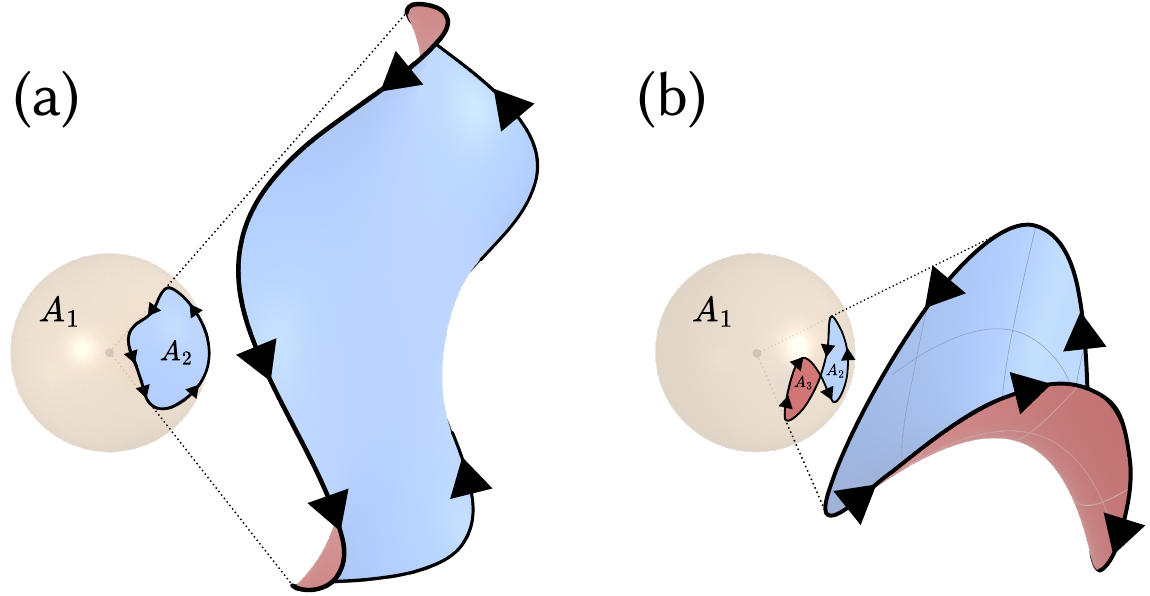}
	\caption{For a given query point and a unit sphere around it, the projection of the boundary of a surface patch splits the sphere into two ((a), the blue `inside' and the beige `outside') or more regions (b) $A_i$ with a constant number of signed intersections $\chi_i$. Blue is the `front' of the surface, i.e., the side with the positive normal; the boundary is oriented accordingly.}
	\label{fig:sphere_proj}
\end{figure}

If the curve network has no boundary, i.e., represents a closed manifold, then the sphere only has one region $A_1$, so we compute only one $\chi_1$. Such manifolds divide the space into `inside' and `outside' (see Jordan–Brouwer separation theorem \cite{guilleminDifferentialTopology2010}), so in this case our algorithm is equivalent to a simple inside-outside ray casting test. 

For a compact manifold \emph{with} boundary, the projection of the boundary $\textrm{Proj}_{S^2_p}\partial M$ on the unit sphere $S^2_p$ divides the surface of the sphere into two or more regions $A_i$. For example, for a surface with a single boundary, $\textrm{Proj}_{S^2_p}\partial M$ divides the sphere into exactly two regions if the boundary projection has no self-intersections, and more if it has (\fig{sphere_proj}).

\section{One-shot algorithm}
\label{sec:one_shot}
Performing ray-surface intersection tests is still rather expensive, so we would like to minimize their number. Our key observation enabling an efficient algorithm is the following lemma: 
\begin{lemma}
	For two regions $i, j$ adjacent across a common edge, $|\chi_i - \chi_j| = 1$.
\end{lemma}

\begin{figure}
	\includegraphics[width=\linewidth]{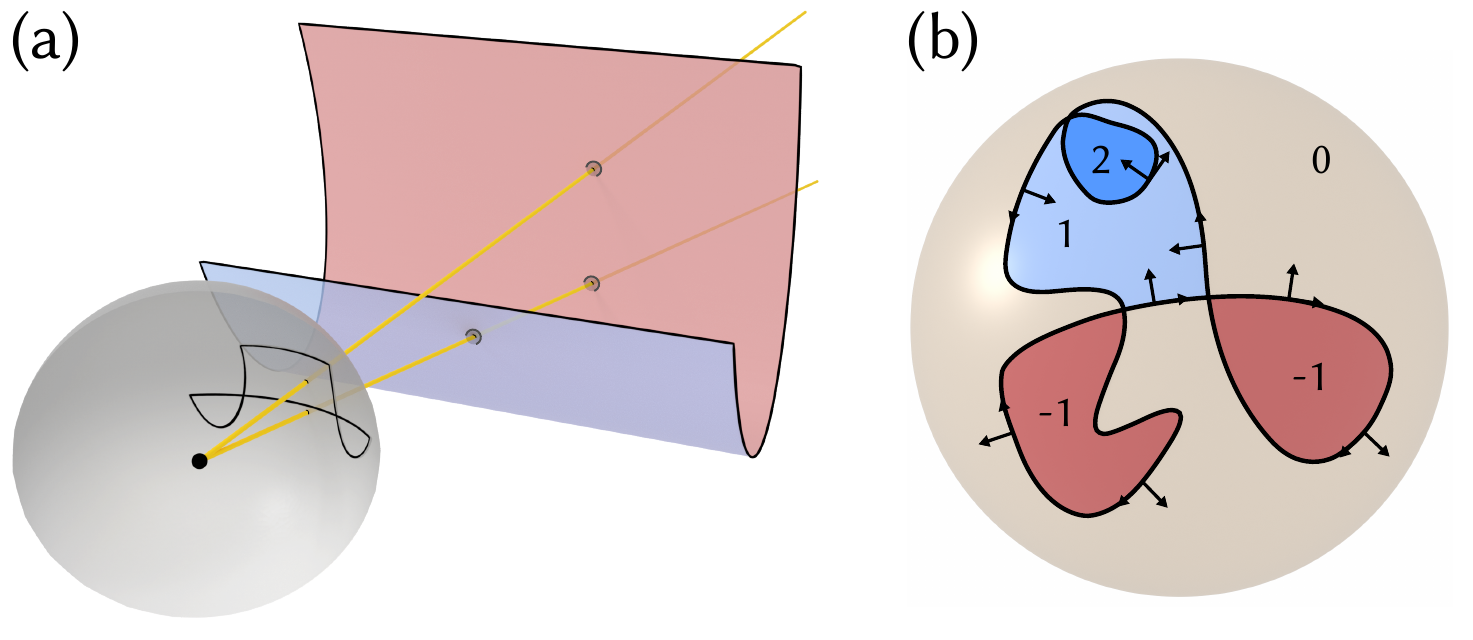}
	\caption{(a) Between two adjacent regions on the sphere, $\chi$ can only change by +1 or -1. In this example, as the ray moves from the bottom ray to the top ray, passing from one region on the sphere to the other, the number of unsigned intersections changes from 2 to 1, and $\chi$ changes from 0 to 1. (b) An example of regions and their $\chi$ values: For each region, $\chi$ is a constant. Furthermore, any two regions $i,j$ adjacent across a segment have $|\chi_i-\chi_j| = 1$. Finally, for a correctly oriented boundary curve, the normals always point towards increasing $\chi$.}
	\label{fig:relative_winding_number}
\end{figure}

Indeed, crossing the projection of the boundary $\partial M$ means one of the regions has an additional intersection with $M$ (Fig.~\ref{fig:relative_winding_number}). This lemma is also a direct corollary of the proposition in \new{Appendix \ref{app:lemma_proof}}. By `adjacent along a common edge' we mean that two regions touching only at a common point do not follow that property. For instance, in Fig.~\ref{fig:relative_winding_number}, while the $\chi=0$ and $1$ regions are adjacent across edges and thus follow the lemma, the $\chi=1$ and $\chi=-1$ regions only share a common point, so $| \chi_i-\chi_j | = 2$.

If we could know the sign of that difference for every pair of adjacent regions, computing all $\chi_i$ would be trivial once we know the value of $\chi$ for at least one region. To determine the sign of the difference $\chi_{ij} \vcentcolon= \chi_i - \chi_j$, we orient the in-plane normals to the projected boundary curve (\fig{relative_winding_number}) so that they always point towards a higher $\chi$ value. We observe that these normals are changing continuously along the projected curve boundary across the intersections (Fig.~\ref{fig:relative_winding_number}), and are always making a left turn with respect to the projected curve tangent. Therefore, we define these normals simply as $n_{S^2_p}(s) \times \tau(s)$.

We first need to compute $\chi_i$ for exactly one arbitrary region. We heuristically choose the largest region $i = \mathrm{argmax}_j \mathrm{Area}(A_j)$ and compute $\chi_i$ using the ray intersection procedure (Sec.~\ref{sec:ray_int}), using an arbitrary point $q$ within a region. For many points the largest region contains no intersections (e.g., Fig.~\ref{fig:relative_winding_number}), so most likely a ray intersection test will return quickly.

If we find at least one intersection, we additionally test whether the ray is tangent to the surface at one of those points. If it is, we ignore the ray, because it is a point of contact rather than an intersection, so the number of intersection points for this ray is unreliable, and choose a different random ray through the region. We test this by verifying $|\overrightarrow{pq} \cdot n |> \varepsilon$, where $n$ is the surface normal. For BEM, we use $\varepsilon = 10^{-2}$; for parametric geometry and meshes, $\varepsilon = 10^{-12}$.

Once all the pairwise differences $\chi_{ij}$ are known for all adjacent regions $i,j$, with a similar approach to \cite{zhouMeshArrangementsSolid2016}, we can represent adjacency between the regions as a directed graph, where each region becomes a vertex, and each pair of adjacent regions $i,j$ gets two edges $i\rightarrow j$ and $j \rightarrow i$, associated with $\chi_{ij}$ and $\chi_{ji}$, respectively. Since it is a connected graph, knowing $\chi_i$ for a single 'seed' vertex $i$ is enough to compute all the other $\chi_j$. We therefore use breadth-first search on the graph starting from $i$ and fill in all the other values of $\chi_j$, which completes our algorithm. Note that the typical number of regions, and therefore graph vertices, is fewer than 5, so this is essentially instantaneous.

\textbf{Optimization.} We further observe that once we know all the intersections along the ray $\overrightarrow{pq}$, we can immediately compute the winding numbers for all the points along that ray without any more intersection tests. More precisely, once we compute all the ray parameter values $t_j$ where it intersects the surface, we know $\chi$ for any point with parameter $t$ along that ray by only counting the intersections with $t_j > t$. Thus, for every point along that ray, we only need to perform the decomposition into spherical regions to compute its winding number.

This optimization allows us to very efficiently compute winding numbers in some typical scenarios. For instance, for all points in a planar slice of an object (Fig.~\ref{fig:harmonic}), where we only need to shoot $\min(W,H)$ rays for a $W$ by $H$ pixel image. Similarly, to compute voxelizations of resolution $N^3$, we only need to shoot $N^2$ rays (Fig.~\ref{fig:vox}). In those scenarios, this technique allows us to compute winding numbers using \textit{fewer than one ray} per point on average. 

\begin{algorithm}
	\caption{One-Shot Generalized Winding Number}
	\label{alg:main}
	\begin{algorithmic}[1] 
		\Function{Winding Number}{$p, C$} \Comment{$p \in \mathbb{R}^3$, $C$ is a closed curve}
		\State $C'$ = Project $\partial C$ onto $S^2_p$
		\State $\{A_n\}$ = IdentifyRegions($C'$)
		\State $i \gets \textrm{argmax} \; \textrm{Area}(\{A_n\})$
		\State $q \gets \textrm{NonTangentRay}(A_i)$
		\State $\{r_k, t_k, n_k\} \gets \textrm{Intersect}(\overrightarrow{pq})$ \Comment{$r_k \in \Omega, t_k \geq 0, n_k \in \mathbb{R}^3$}
		\State $\chi_i \gets\sum_k \textrm{Sign}(n_k \cdot \overrightarrow{pq})$  \Comment{Eq.~\ref{eq:chi}}
		\State $\{\chi_{ij}\} \gets \textrm{PairwiseDifferences}(\{A_n\})$
		\State $\{\chi_j\}=\textrm{BFS}(\{\chi_{ij}\}, i)$

		\State \textbf{return} $\frac{1}{4\pi} \sum_j \chi_j \cdot Area(A_j)$ \Comment{Eq.~\ref{eq:wn_discrete_sum}}
		\EndFunction
	\end{algorithmic}
\end{algorithm}

\subsection{Implementation Details}
\label{sec:impl_details}
To implement this generic algorithm, we need three main procedures: finding intersections between boundary curves projected onto the unit sphere, ray-surface intersection, and computing areas on the sphere.

\textbf{Meshes.} For meshes, we find ray-mesh intersections using a k-d tree; areas are computed using the spherical polygon area formula as simply $A = 2\pi - \sum_i (\alpha_i - \pi),$ where $\alpha_i$ are the interior angles. Finally, we identify regions by computing a surface arrangement on the sphere by first finding intersections between polylines on a sphere, which can be done using a Surface Sweep Algorithm. For a maximum constant number of intersection points, its time complexity is $\mathcal{O}(B \log B)$  \cite{ComputationGeometryAlgorithms}, where $B$ is the number of mesh boundary segments. All in all, the asymptotic time complexity of our algorithm is $\mathcal{O}(B \log B + \log F)$ for a single query point. For such scenarios as voxelization, however, the ray intersection is done only once for many query points, so the time complexity for meshes is $\mathcal{O}(B \log B + \frac{1}{\max(W,H)}\log F)$ per point, which for meshes with $B \ll F$ is faster than the state-of-the-art exact winding number computation with the empirical complexity of $\mathcal{O}(F^{0.55})$ \cite{Jacobson-13-winding}. Note that while \cite{barillFastWindingNumbers2018} computes a much faster approximation, our algorithm computes exact winding number.

\begin{figure}
	\includegraphics[width=\linewidth]{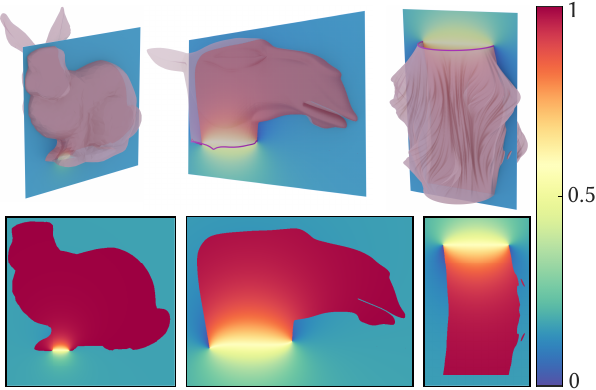}
	\caption{The results of our one-shot algorithm on meshes. Camel head mesh \cite{sorkineLaplacianMeshProcessing}, \href{https://www.thingiverse.com/thing:126567}{Vase} by virtox.}
	\label{fig:mesh}
\end{figure}

\textbf{Parametric Surfaces.} For surfaces made out of parametric patches, such as Coons patches, Bézier patches or triangles, or NURBS patches, we can compute the winding number without surface discretization as follows. Starting with the problem of ray-parametric patch intersection, \new{let us} take a B\'ezier triangle as an example, which in essence is a low-degree polynomial $f(u,v)$ with a simplex $u,v>0, u+v<1$ as the parametric domain.

Then for a given ray $O + t R$, where $O, R\in \mathbb{R}^3, t > 0$, we can formulate the following optimization problem that finds the first intersection along the ray: 
	
\begin{align}
		\begin{split}
			&\min _{u,v,t} t\\
			&\textrm{s.t.} f(u,v)  = O+tR\\
			&u, v, t\geq 0\\
			&u+v \leq 1.
		\end{split}
		\label{eq:ray_bezier_int}
\end{align}

\begin{figure}
	\centering
	\includegraphics[width=\linewidth]{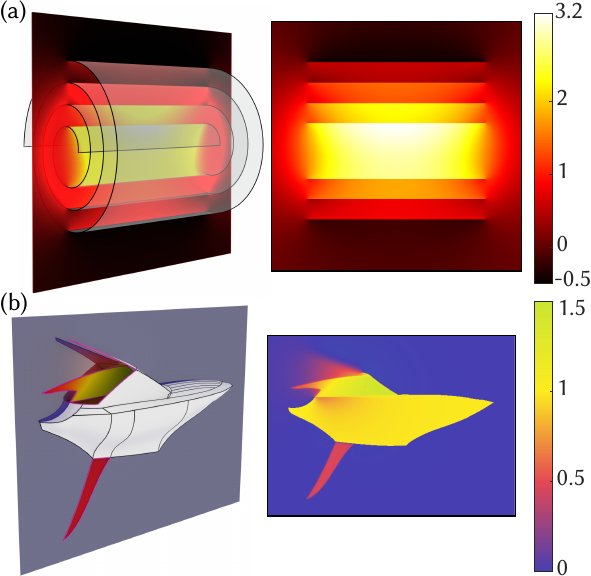}
	\caption{Additional results on parametric surfaces: a spiral with high winding numbers (a) and a non-manifold boat with Coons patches (b)}.
	\label{fig:parametric}
\end{figure}
In practice, we use a nonlinear solver to find intersections. We express the ray-surface intersection test as an equation	$f(\xi)+(q-p) t = 0$, which we solve numerically for $t \geq 0, \xi \in \Omega$, finding \emph{all} roots. We first can compute an axis-aligned bounding box of a parametric patch, then intersect it with each ray, giving $t_{min}$ and $t_{max}$; box's diagonal is $D$. If the parametric surface has no self-intersections, we then compute all roots by dividing the interval $[t_{\textrm{min}}, t_{\textrm{max}}]$ into a number of subintervals (in our implementation, we use $\max(2,30 \frac{t_{\textrm{max}}-t_{\textrm{min}}}{D})$ and initializing the solver with the middle of each subinterval. We then consider the roots duplicate, if the distance between them is less than the accuracy of the solver. \new{Note that for polynomial patches one can use specialized techniques such as `pencil of a matrix'  \cite{xiaoNoniterativeMethodRobustly2019}. We do not assume that surfaces are polynomial, so in our 3D implementation we use a generic nonlinear solver instead.}
	
We discretize the boundary of a parametric surface and compute region decomposition using polyline intersection algorithm, similar to the mesh case. Unlike \emph{surface} discretization, however, this \emph{boundary} discretization may lead only to minor differences in the final winding number close to the boundary (Sec.~\ref{sec:validation}). 

\setlength{\columnsep}{1em}
\begin{wrapfigure}[5]{r}{0.7in}
	\centering
	\vspace{-\intextsep}
	\includegraphics[width=0.7in]{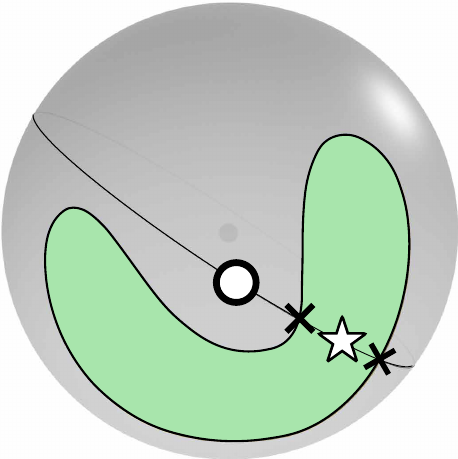}
\end{wrapfigure}
To find a point within a spherical region, we first compute the Euclidean centroid of the region,
which, due to the convexity of the sphere, lies inside the sphere. We project it onto the sphere (inset, white circle) and connect it to an arbitrary point (one of the black crosses) along the region boundary via a geodesic, i.e., great circle. We then choose the middle of some great circle arc that is inside the polygon (inset, white star), testing whether an arc is inside using the orientation of the boundary curve tangents at the arc’s endpoints. Note that by construction the circle will necessarily either intersect the region boundary or be tangent to it. If it is tangent, we choose another arbitrary boundary point and repeat the procedure. 


\textbf{Multiple Boundaries.} In the case that the surface has multiple disconnected boundaries, we project all the boundaries in a single spherical arrangement. 

\textbf{Exact Computations}. For parametric geometry, our pipeline enables exact computation of the winding number. As we show in \new{Appendix \ref{app:sos}}, the intersection problem (Eq.~\ref{eq:ray_bezier_int}), as well as the problems of intersecting two parametric curves, such as B\'ezier, and finding their self-intersections, can be relaxed into \emph{Sum-of-Squares} (SOS) formulations \cite{SOS}, which then yield convex problems solvable by standard semidefinite programming (SDP) solvers. This relaxation is tight, yielding exact recovery in almost all cases (Sec.~\ref{sec:validation}, \cite{SOS}). All these techniques, as shown in \cite{SOS} are easily extendable to arbitrary NURBS patches. In this case, for exactness, instead of sampling a point within the largest region on the sphere, we sample a random point on the sphere and perform a point location within the spherical region decomposition, which can be done exactly. SOS optimization, however, is slow (see Sec.~\ref{sec:validation}). Finally, for a region bounded by a parametric curve $\gamma(s): [0, L] \rightarrow S_p^2$ projected onto the unit sphere, using Gauss-Bonnet theorem, we can compute the area inside the curve as $A = 2\pi - \int_0^L k_g(s)$, where $k_g$ is the geodesic curvature of the boundary. Note that for most parametric boundaries, including B\'ezier curves, this integral has to be evaluated numerically. 

\section{Curve Networks}
\label{sec:gaps}

In this section, we demonstrate an application of our algorithm  determining winding number of a curve network \new{(Fig.~\ref{fig:vox}, \ref{fig:harmonic})}, where each patch satisfies a known linear PDE. In the following, we focus on the Laplace's equation for the simplicity of exposition; our method can be generalized to any PDE where Boundary Element Method is applicable, such as Poisson equation, biharmonic equation, or others.
\begin{figure}
	\includegraphics[width=\linewidth]{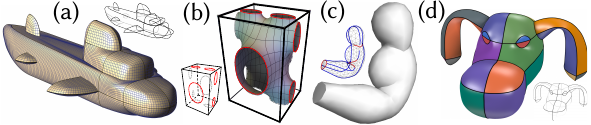}
	\caption{Curve networks can, depending on the context, represent different surfaces, such as (a) parametric patches \cite{DesigndrivenQuad}, (b) minimal surfaces \cite{wangComputingMinimalSurfaces2021}, \cite{MinimalSurfaceBlog} (c) other PDEs \cite{nealenFiberMeshDesigningFreeform2007}, or (d) more complex surfaces \cite{yuPiecewisesmoothSurfaceFitting2022}.}
	\label{fig:motivation}
\end{figure}

\subsection{Preprocessing}
\label{sec:preprocessing}
Given an input curve network composed of polylines, we use the method of \cite{zhuangGeneralEfficientMethod2013} to detect loops, consisting of input curve segments. We then identify \new{\emph{boundaries}} as \new{segments} adjacent to only one loop. If the surface is an orientable 2-manifold, we then consistently orient the loops via breadth-first search such that each non-boundary curve segment is traversed once in each direction. This procedure detects if the surface is not a manifold, in which case we compute a winding number as a sum of winding numbers of each manifold patch.

\subsection{Ray-Surface Intersection}
\label{sec:ray_int}
We need to find a point on the ray $\overrightarrow{pq}$ that satisfies the boundary value problem for a linear PDE for each coordinate $(x,y,z)$.  

We formulate the Dirichlet problem for Laplace's equation for $x: \Omega \subset \mathbb{R}^2 \rightarrow \mathbb{R}$:
\begin{align}
	\begin{split}
		&\Delta x = 0\\
		&\left.x\right|_{\partial \Omega} = \tilde{x},
	\end{split}
	\label{eq:laplace}
\end{align}
where $\tilde{x}$ is the given curve loop $x$ coordinate. For simplicity, for the examples we set $\Omega = [0,1]^2$. \new{Solutions to this problem are related to minimal surfaces \cite{meeksSurveyClassicalMinimal2012}}.




Here we follow the Boundary Element Method (BEM) with collocation \cite{BEMcourse}. We express the surface as a smooth map $f: \mathbb{R}^2 \rightarrow \mathbb{R}^3$, where each coordinate of $f(\xi) = (x,y,z)$ is the solution to the corresponding boundary value problem (Eq.~\ref{eq:laplace}). 
In BEM, a representation formula states that the solution of the linear differential equation at a point $\xi \in \Omega$ is expressible as a boundary integral over $\Gamma = \partial \Omega$, which converts our surface into parametric form:
\begin{equation}
	f(\xi) = \int _{\eta \in \Gamma} G(\xi,\eta) \frac{\partial f(\eta)}{\partial n} - \frac{\partial G (\xi,\eta)}{\partial n} f(\eta) d\Gamma,
	\label{eq:representation}
\end{equation}
where $G(\xi,\eta)$ is the fundamental solution of the PDE, and $n$ is the outwards normal of $\Gamma$ (See Appendix \ref{app:bem_details}). 
\begin{figure}
	\centering
	\includegraphics[width=0.7\linewidth]{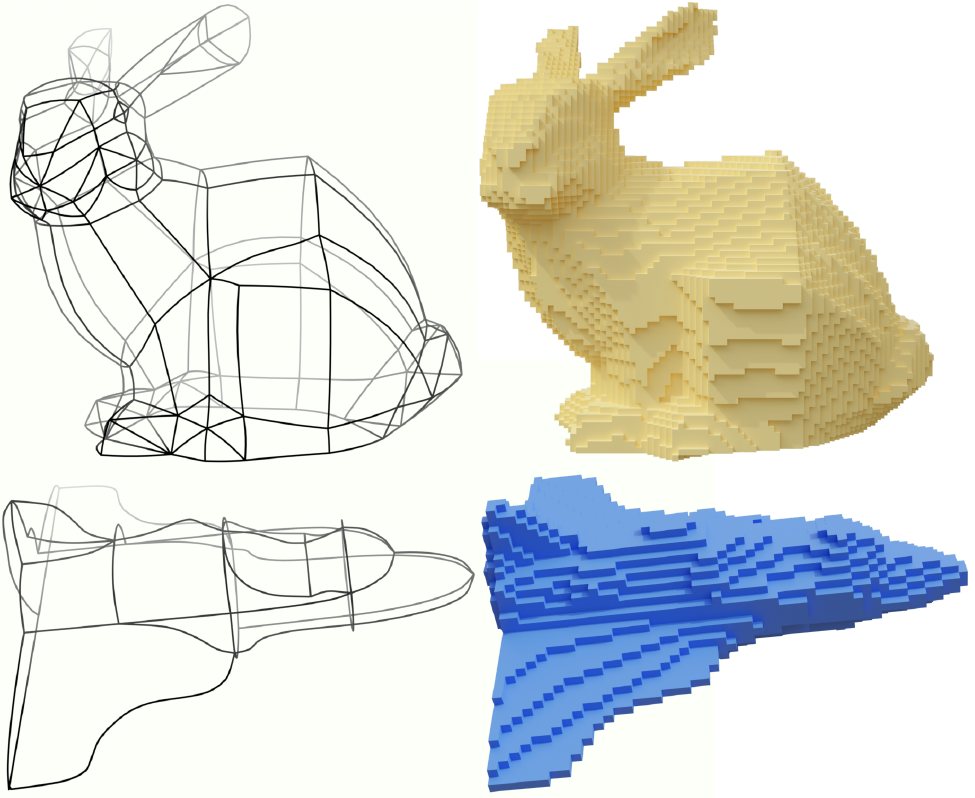}
	\caption{The results of our algorithm on closed curve networks, surfaces with minimal surfaces, with our one-shot algorithm. Voxels with a GWN of 1 are occupied and those with a GWN of 0 are left empty.}
	\label{fig:vox}
\end{figure}





We discretize the integrals in Eq.~\ref{eq:representation} using constant boundary elements, and numerically integrate for each polyline segment using trapezoid integration. To find Neumann boundary conditions, i.e., the normal derivative $\left.\frac{\partial f(\xi)}{\partial n} \right|_\Gamma$, we use the Dirichlet boundary conditions, which are the curve loop's coordinates, and  solve a dense linear system \cite{BEMcourse} via Cholesky decomposition. 

Once both boundary conditions are known, we apply our method using the representation formula in Eq.~\ref{eq:representation} as a parametric form. We further optimize the process by observing that 
a minimal surface is contained inside a bounding box of the boundary $\tilde x$ due to the maximum principle. 
We can therefore easily compute the minimum and maximum possible $t$ values $t_{\textrm{min}}, t_{\textrm{max}}$ by intersecting the ray with the axis-aligned bounding box, improving performance.

\section{Validation and Results}
\label{sec:results}
\label{sec:validation}

We implemented our method in C++ with Eigen \cite{eigenweb}, \verb|sphericalpolygon| library \cite{sphericalpolygon} for areas, and GNU Scientific Library's \cite{gsl_manual} implementation of Powell’s Hybrid method as a nonlinear solver for intersection with parametric geometry. 

Throughout the paper, we demonstrate the results of our one-shot winding algorithm on various 2D parametric curves, parametric surfaces (a set of Coons patches in Fig.~\ref{fig:teaser} a and \ref{fig:parametric}b, extrusion surfaces in Fig.~\ref{fig:parametric_gaps}d,\ref{fig:parametric}a), meshes (Fig.~\ref{fig:teaser}c, \ref{fig:mesh}), and curve networks with minimal surfaces (Fig.~\ref{fig:teaser}b, \ref{fig:harmonic}). We demonstrate surfaces with boundary, because without boundary our method is a trivial inside-outside test. The boundaries on the curve networks are selected manually. In Fig.~\ref{fig:vox}, we show a typical application of winding numbers, a voxelization of the volume enclosed inside the curve network, computed without discretizing the surfaces. We consider a voxel inside if its center has a winding number $\geq 0.5$. We demonstrate boolean operations in Fig.~\ref{fig:bool}.  To compute the `set union', we add the winding numbers; to compute `set intersection', we multiply them pointwise.

\begin{figure}
	\includegraphics[width=\linewidth]{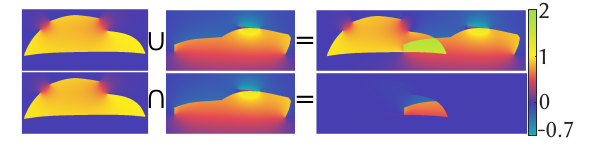}
	\caption{Winding numbers enable boolean operations on curve networks. Here `set union' is sum of winding numbers and `intersection' is the pointwise product.}
	\label{fig:bool}
\end{figure}


\textbf{2D Parametric Geometry.} We first evaluate the performance and precision of our method on 2D parametric geometry. As a test, we randomly generate cubic B\'ezier curves (N=1000) with control points in the unit square. For each curve, we compute a winding number for each point of a regular grid (\new{$250\times250$}) in the curve's bounding box (Fig.~\ref{fig:performance_2d}). 

\begin{figure*}
	\includegraphics[width=\linewidth]{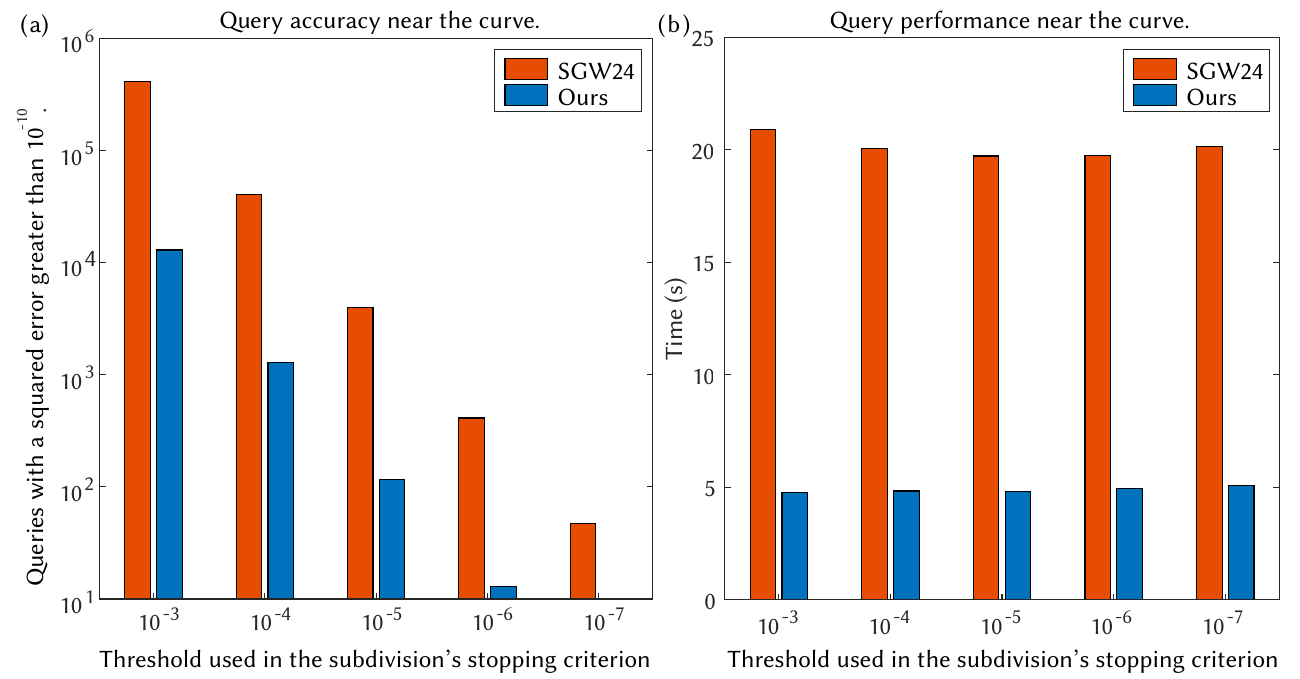}
	\caption{For 2D parametric curves, on a typical regular grid of query points, our method (a) is more precise and faster than the state-of-the-art method \cite{RobustContainmentQueries2024a} (b). The performance of both methods \new{is minimally affected by} the stopping criterion threshold, since most of the computation time is spent calculating arccosines. We find that our method is \new{$4\times$} faster on a regular grid of \new{$250\times250$} query points within the bounding box of 1000 random cubic B\'{e}zier curves. Both methods fall back to the same algorithm when the query point is outside the bounding box of the curve.}
	\label{fig:performance_2d}
\end{figure*}

We compare our performance with the 2D state-of-the-art method \cite{RobustContainmentQueries2024a}. For any query point outside the bounding box, we use the same closure property as \cite{RobustContainmentQueries2024a}, so our performance on those points is identical. Inside the bounding box, on a regular grid of query points, however, our method performs significantly faster (up to \new{$4\times$}), as we only need to perform subdivisions for one ray per row due to our one-shot strategy. Furthermore, in this setup, our algorithm is more precise: We obtain 0 `misclassified' points (winding number error above $10^{-10}$ compared to the ground truth) with a threshold of $10^{-7}$. Our method takes 5074ms,while \cite{RobustContainmentQueries2024a} takes 20131ms and produces 47 misclassified points.

For \emph{random} query points, without a regular grid, we cannot reuse the rays and have to shoot a new ray for each query point. In this case, our method is roughly 2x slower for the same accuracy compared to \cite{RobustContainmentQueries2024a}. On the same test set, per uniformly sampled query point in the bounding box, our method takes 0.08$\mu s$, while their method takes 0.038$\mu s$. We underline, however, that regularly sampled grids of query points are quite common in some applications, including voxelization in 3D or visualization in 2D/3D.

\textbf{Parametric Surfaces.} We validate the accuracy of our one-shot winding number computation for a discretized boundary and compare it with the accuracy of the hierarchical winding number algorithm (HWN) \cite{Jacobson-13-winding} and \cite{barillFastWindingNumbers2018} that use a mesh (Fig.~\ref{fig:accuracy_err}). As as simple test case, we took a randomly generated B\'ezier triangle, generated $10^4$ query points evenly in an arbitrary slice, and plotted maximum and mean $\ell_2$ (RMSE) errors \emph{in a logarithmic scale} of our method and theirs as functions of the number of boundary samples and mesh vertices, respectively. For the mesh-based methods, we meshed the parametric domain with a regular triangle mesh. The horizontal axis indicates the number of boundary edges for our method and their mesh (bottom), and the corresponding number of faces (top).

Even with a small number of boundary edges, our method is quite precise. Both mesh-based methods, until a certain refinement level, have large maximum errors close to the surface: Each time a query point is on one side of the parametric surface and on the other of the mesh, the winding number will have a wrong sign, yielding a significant error. In contrast, the error of our method, even for roughly discretized boundaries, is significantly smaller, noticeable only around the surface boundary, and stems mainly from approximation of the spherical areas. 

Under refinement, FWN \cite{barillFastWindingNumbers2018}, being an approximation, does not improve. Our method becomes an order of magnitude more precise than the exact mesh-based \cite{Jacobson-13-winding}; our worst (maximum) error is roughly the same as their average. Note that to get to a similar maximum error accuracy as our $1900$ boundary edges, mesh-based methods would need roughly $16.3\cdot 10^6$ faces, taking 0.9Gb in memory. Our memory footprint is negligible in comparison (kilobytes).

\begin{figure}
	\includegraphics[width=\linewidth]{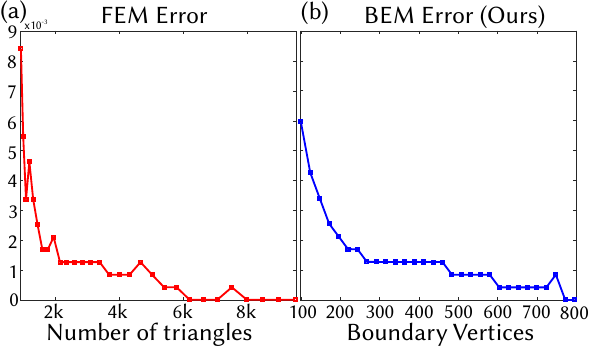}
	\caption{Accuracy of the the winding number computation for curve networks with minimal surfaces computed via BEM. (right). Our method, in general, has similar precision to the mesh-based method (left).}
	\label{fig:fig_error}
\end{figure}

\begin{figure}
	\includegraphics[width=\linewidth]{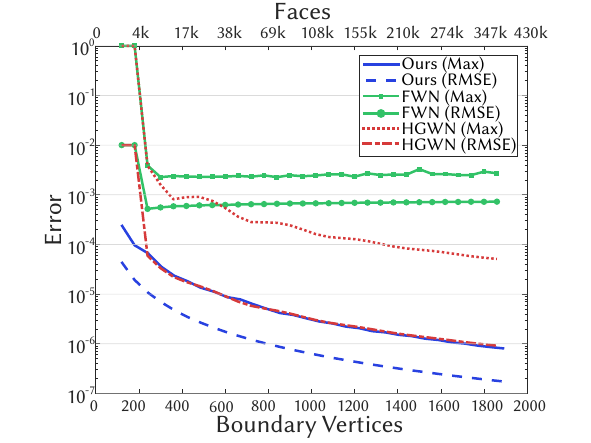}
	\caption{Logarithmic error plot with respect to ground truth for our method (blue), \cite{Jacobson-13-winding} (HGWN, red) and \cite{barillFastWindingNumbers2018} (FWN, green). Even when the boundary is discretized, our method computes winding numbers a few orders of magnitude more precisely than \cite{Jacobson-13-winding} or \cite{barillFastWindingNumbers2018} on a mesh of comparable complexity. Horizontal axis shows the number of faces (top) and boundary vertices (bottom) of a mesh and used in our method.}
	\label{fig:accuracy_err}
\end{figure}

\textbf{3D Performance.} We evaluate performance of our method on a Intel® Core™ i7-9700X @ 4.900GHz. We measure all the total performances using the C++ code, parallelized via OpenMP.

We present the results in Table~\ref{tab:timing_bem_parametric} for parametric and minimal surfaces, computing winding numbers of all points in a slice (i.e., the optimization in Sec.~\ref{sec:impl_details} is on). The performance per point depends mostly on the number of boundary samples; for most of our results a query point takes $\approx 0.2$ms. Dense system solve for BEM is done once for all the query points and takes 10-90ms.

The bottleneck of our algorithm is marked as `Boundary Processing' (BP); almost all time is spent on computing finding all intersections between spherical segments. Note that despite the optimal algorithmic complexity of $O(B \log B)$, in our implementation we use a na\"ive $O(B^2)$ algorithm which may explain the performance.

For meshes, the accuracy of our method is exactly the same as the ground truth of \cite{Jacobson-13-winding}. Performance-wise, our method may bring advantages for meshes where, as discussed in Sec.~\ref{sec:impl_details}, the boundary has few elements, while the mesh has many faces. In Table~\ref{tab:timing_mesh} we present typical performance statistics for such meshes. Note that the approximation algorithm \cite{barillFastWindingNumbers2018} timing statistics are a few times faster: $0.59, 0.55, 0.53\mu$s. Our goal, however, is precise winding number computation: as Fig.~\ref{fig:accuracy_err} shows, their approximation is often imprecise.
 
\begin{table}[t]
	\centering
	\begin{tabular}{|c|c|c|c|c|c|}
		\hline
		Input & \# L & \# B & \# Q & \% B.P. & Ours (ms) \\
		\toprule 
		\hline
		Beetle & 53 & 950 & 196k/308k  & 93.4 & 0.157\\
		\hline
		Hand & 132 & 2150 & 136k/136k & 91.92 & 0.26\\
		\hline
		Enterprise & 117 & 1632 & 250k & 99.96 & 0.26\\
		\hline
		Spacecraft & 20 & 2761 & 113k & 98.8 & 1.20\\
		\hline
		\hline
		Roll & 1 & 200 &  250k & 95.93 & 0.10\\
		\hline
		Car & 25 & 1500 & 180k & 28.92 & 1.13\\
		\hline
		Ship  & 30 & 1780 & 262k & 20.98 & 1.84\\
		\hline
		Gear & 2 & 400 & 250k & 41.66 & 0.208\\
		\hline
		\hline

	\end{tabular}
	\begin{tabular}{|c|c|c|c|c|c|c|}
		\hline
		Input & \# F & \# B & \# Q & \% B. P. & HG ($\mu$s) & Ours ($\mu$s)\\
		\toprule 
		\hline
		Piggy & 11k & 6 & 600k & 96.66 & 4.41 & 1.46 \\
		\hline
		Bunny & 5k & 9 & 360k & 97.64  & 2.69 &  2.12 \\
		\hline
		Camel & 23k & 56 & 480k & 99.01 &  4.71 & 8.99 \\
		\hline
		\hline
	\end{tabular}

	\caption{Timing statistics for curve networks with minimal surfaces (top half) and parametric surfaces with Coons patches (Car, Ship), and extrusion surfaces (gear, roll). The columns $\#L, \#B, \#Q$ refer to loops, boundary edges, and query points respectively.  Total time in ms per query point is in the last column; time spent boundary processing is \%BP. Bottom: statistics for meshes, time in $\mu$s ($10^{-6} s$) per query point. HG refers to the hierarchical method of \cite{Jacobson-13-winding}.} 
	\label{tab:timing_bem_parametric} 
		\label{tab:timing_mesh}
\end{table}

\textbf{Curve networks with minimal surfaces.} We compare the accuracy of the nonlinear solver approach for a BEM curve network, with the method of \cite{Jacobson-13-winding} on a FEM discretization of the surface using linear elements. As the ground truth surface, we take $z(x,y) = e^{x-1} \sin(y) - e^x  \cos(y)$; we mesh it and compute its winding numbers. We then reconstruct the surface by FEM using a regular grid on the parametric domain. Then we uniformly sample $342$k points, keeping only those 2368 points with the absolute value of the ground truth winding number $\in (0.4995, 0.5005)$; on the rest the error is 0. We can see (Fig.~\ref{fig:fig_error})  that our solution is as accurate as theirs for just a fraction of the stored data (e.g., 500 vertices vs. 5000 triangles). The discrete increments of the error are because each query point can be classified as either on the correct side of the surface, yielding zero error, and on the wrong side, yielding a winding number error of 1.

\textbf{Adaptive Quadrature.} 
\label{sec:adaptive_quadrature}
For parametric curves and surfaces, the generalized winding number can be computed via numerically evaluating \new{the} signed solid angle surface integral using adaptive quadrature. In 2D, for a regular grid of query points, we are significantly faster than the state-of-the-art method \cite{RobustContainmentQueries2024a}, which was already an improvement over adaptive quadrature. In 3D, our approach is generally slower than adaptive quadrature methods, with the bottleneck being the computation of spherical arrangement.

For a class of parametric surfaces, however, with a simpler boundary, for instance, planar, our method is faster in a scenario with a regular grid. As a projection of a planar curve on a sphere can have no self-intersections, for such parametric surfaces we only need to perform a trivial area computation and a ray-surface intersection. We show two examples of surfaces of revolution with planar boundaries in Fig.~\ref{fig:surf_revolution} where we are both faster and more accurate than adaptive quadrature. On a regular grid of $800\times800$, it is 83s for our method vs. 103s for adaptive quadrature (both yielded similar errors of 3.1e-5, Fig.~\ref{fig:surf_revolution}a). Since the parametric surface in Fig.~\ref{fig:surf_revolution}b has a sharp feature, we switched both methods to higher precision arithmetic, decreasing performance. On the challenging example in Fig.~\ref{fig:surf_revolution}b, our method computed the winding numbers in 484s with an MSE of 2.6e-4. Adaptive quadrature took 12637s to compute result with a high MSE of 0.0809. This showcases the critical difference with the adaptive quadrature: since our method does not need to discretize the surface, high-frequency surface details are much less of a problem for us. Our approach also requires significantly less function evaluations compared to adaptive quadrature.

\begin{figure}
	\includegraphics[width=\linewidth]{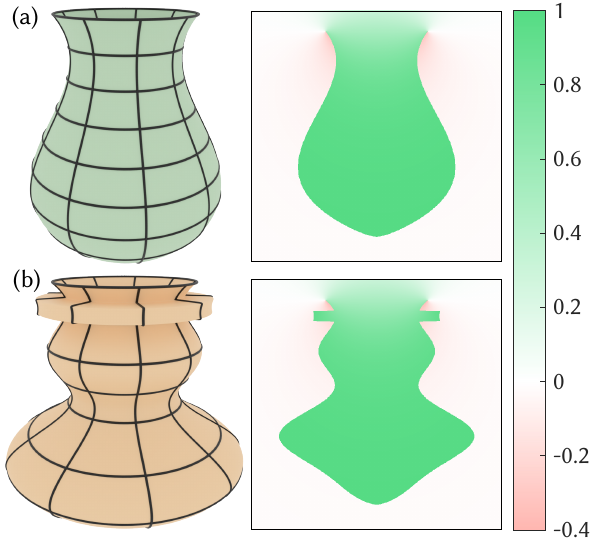}
	\caption{Our approach on two surfaces of revolutions. Computing the winding numbers for the surface in (a) took our method 83s while adaptive quadrature took 103s for a similar error of  $3.125e-5$ and $3.140e-5$ respectively.
		In (b), as the surface has a sharp feature, we used higher precision arithmetic, significantly decreasing performance. 
		For this experiment, our method took 484s while having a MSE of $2.6875e-4$. This was an almost intractable input for adaptive quadrature, taking it 12637s to only achieve a MSE of 0.0809.}
	\label{fig:surf_revolution}
\end{figure}

\textbf{Singular and unbounded surfaces.} Our method is compatible with both singular and unbounded surfaces, as long as the boundary has an asymptote at infinity and each ray has a finite number of intersections. We demonstrate an example in Fig.~\ref{fig:singularity_unbounded}. For unbounded surfaces, we project the asymptote of the surface at $\lim_{r \rightarrow \infty}$ from the center of the parametric domain, otherwise our method requires no changes. Note that these surfaces are hard to process with conventional methods, as discretizing them may result in significant accuracy loss.

\begin{figure}
	\includegraphics[width=\linewidth]{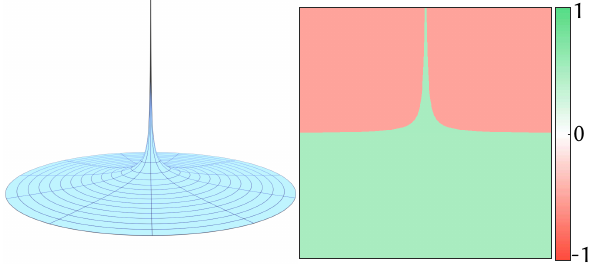}
	\caption{Our method works without modification on any oriented surfaces that support ray-intersection queries, including infinite surfaces with singularities. }
	\label{fig:singularity_unbounded}
\end{figure}

\textbf{Precise computation for parametric surfaces}. We also implemented \new{a proof of concept} SOS optimization for precise winding number computation for parametric shapes, in MATLAB. We use YALMIP \cite{Lofberg2004} as the SOS interface and MOSEK \cite{mosek} to solve.

We validate the accuracy of the SOS solvers, showing all the intersections with parametric surfaces can be found within machine precision. Note that B\'ezier curves and B\'ezier triangles were chosen as an example; \cite{SOS,SOS2} demonstrate that all these techniques are easily generalizable to arbitrary NURBS patches.

In each test, we generate the random control points, as well as ray origin and direction, following a uniform $[-1,1]$ distribution. We generated 400 test pairs (half with intersections, half without) for each test. For the ray intersections, we got exact recovery (within machine precision) or correctly identified no intersections for $99.75\%$ of the rays. Exact recovery is identified as a corresponding eigenvalue being larger than a threshold ($10^{-3}$ in our experiments) \cite{SOS}. The only ray that did not have exact recovery had an error in $t$ of 0.0064; such rare ray can be simply ignored using the eigenvalue threshold. The projected B\'ezier curve intersection test has a similar exact recovery rate of 99.75\%, same for self-intersections.

SOS, while being precise, is slow: finding intersections of projected B\'ezier curves takes $592 \pm 275 ms$ via SOS versus $1.2$ms via the polyline intersection code; ray-B\'ezier triangle intersection takes $581\pm 25$ms via SOS versus $8.85$ms via the nonlinear solver with our parameters. Most of SOS time is spent in YALMIP, not in the solver. 

\textbf{Limitations.} As the complexity of our method is driven by the complexity of the boundary, our method excels when the boundary is simple, and is slow when the boundary is complex or has multiple connected components. Perhaps an efficient system would choose a suitable method to compute the winding number based on the boundary complexity vs. the complexity of the curve/surface itself, falling back to previous methods if necessary.  Additionally, errors in root finding, especially missed or spurious roots, can
change the value of $\chi$, causing errors. While our analysis (Sec.~\ref{sec:validation}) shows this is extremely rare, this can be fixed by increasing the search resolution or, when exactness is necessary, via SOS.


\section{Conclusions}

We have presented a new method for computing a generalized winding number for parametric curves and surfaces, meshes, and curve networks with minimal surfaces. We show that for many cases such as 2D parametric curves or 3D meshes or parametric surfaces with simple boundary, on a regular grid, our method is faster than the state of the art. With the use of SOS, our method can compute winding numbers of parametric geometry precisely. We hope that the future work will explore applications of our method to tracking winding numbers of moving or optimized geometry, especially if the boundary remains fixed, where discretization becomes even more of an issue.


\begin{figure*}
	\includegraphics[width=\linewidth]{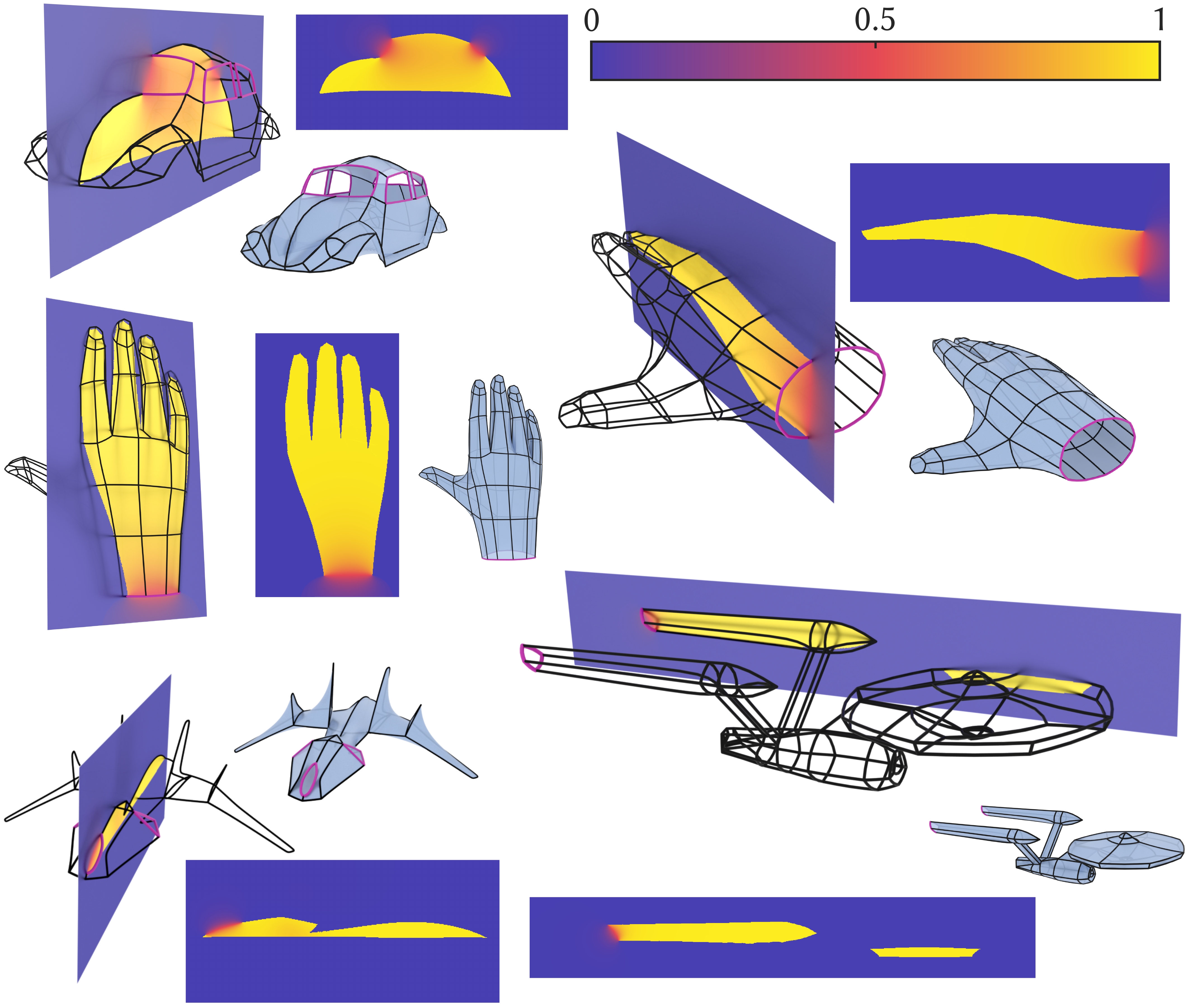}
	\caption{Winding numbers computed for each point of a planar slice for the curve networks, surfaced with minimal surfaces, with our one-shot algorithm. Boundaries are highlighted in pink.}
	\label{fig:harmonic}
\end{figure*}

\section*{Acknowledgements}
We acknowledge the support of the Natural Sciences and Engineering Research Council of Canada (NSERC) under Grant No.: RGPIN-2019-05097 (“Creating Virtual Shapes via Intuitive Input”) and RGPIN-2024-04968 ("Modelling and animation via intuitive input"), and the NSERC - Fonds de recherche du Québec - Nature et technologies (FRQNT) NOVA Grant No. 314090. We thank Ivan Puhachov for providing several figures and Paul Zhang for his valuable assistance with Sum-of-Squares.

\bibliographystyle{eg-alpha-doi} 
\bibliography{wn}  
\clearpage
\appendix
\section{Proof of Lemma \ref{lem:main}}
\label{app:lemma_proof}
\new{Intuitively, in $\mathbb{R}^3$, a ray and a surface intersect \emph{transversally} if at each intersection point the ray is not tangent to the surface. At such intersection, the ray's tangent and the surface's tangent plane span the whole $\mathbb{R}^3$. More formally,}

\label{app:proof}

\begin{definition} \cite{guilleminDifferentialTopology2010}.
	The smooth map $f$ is said to be \textit{transversal} to the submanifold $Z$, if at every point $x$ in the preimage of $Z$, where $f(x)=y$ and $T$ is a tangent space,
	\[
	\mathrm{Image}(df_x) + T_y(Z) = T_y(Y).
	\]
\end{definition}

\begin{figure}[H]
	\includegraphics[width=\linewidth]{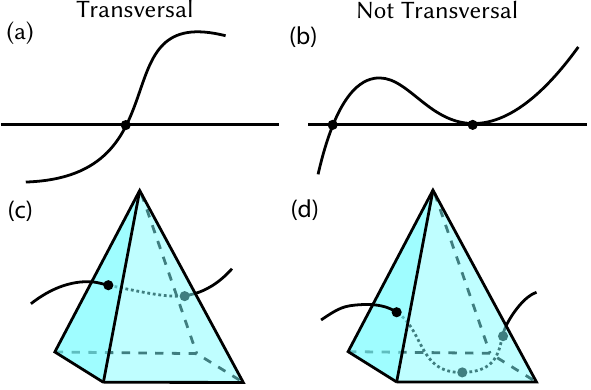}
	\caption{Two curves in 2D are transversal (a) if for every intersection their tangents at the intersection spans $\mathbb{R}^2$. A curve and a surface in 3D are transversal (c) if for every intersection the curve's tangent and the surface's tangent plane spans $\mathbb{R}^3$. Two submanifolds are not transversal if their tangent spaces do not span the ambient space (b,d).}
	\label{fig:transversal}
\end{figure}

In our context, the \emph{orientation number} at an intersection point measures whether the ray and the normal to the surface are pointing in the same direction, i.e., it is $\mathrm{Sign}(n \cdot \overrightarrow{pq})$. For a more formal definition, please see \cite{guilleminDifferentialTopology2010}.

\new{In $Y = \mathbb{R}^3$, we first examine the case of a smooth surface $Z$ without boundary. The smooth map $f: X \rightarrow Y$ is a parameterization of a ray $f: \mathbb{R}^+ \rightarrow \mathbb{R}^3$. More generally,}

\begin{proposition} \cite{guilleminDifferentialTopology2010}
If $f: X \rightarrow Y$ is transversal to $Z$, then $f^{-1}(Z)$ is a finite number of points, each with an orientation number $\pm 1$. We define the \emph{intersection number} $I(f,Z)$ to be the sum of these orientation numbers. 
\end{proposition}

\new{The main paper refers to the intersection number as the `number of signed intersections' as it is customary in computer graphics \cite{fengWindingNumbersDiscrete2023}.}

For boundaryless $X$ and $Z$, when at least one of them is compact, \cite{guilleminDifferentialTopology2010} formulate the following proposition:

\begin{proposition} \cite{guilleminDifferentialTopology2010}
	Homotopic maps have the same intersection numbers.
	\label{prop:base}
\end{proposition}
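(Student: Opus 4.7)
The plan is to realize $f_0^{-1}(Z)$ and $f_1^{-1}(Z)$ as the two boundary components of a compact oriented $1$-manifold sitting inside $X \times [0,1]$, and then to exploit the classification of compact $1$-manifolds to force their signed point counts to agree.

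First, I would pick a smooth homotopy $F \colon X \times [0,1] \to Y$ with $F(\cdot, 0) = f_0$ and $F(\cdot, 1) = f_1$. Because $f_0$ and $f_1$ are already transversal to $Z$ on $\partial(X \times [0,1]) = X \times \{0,1\}$, I would invoke the relative version of the Transversality Homotopy Theorem (\citeauthor{guilleminDifferentialTopology2010}, Ch.~2) to perturb $F$ rel boundary into a map, still called $F$, that is transversal to $Z$ everywhere while leaving $f_0$ and $f_1$ untouched on the two endcaps.

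Next, transversality of $F$ and of its restriction to $\partial(X \times [0,1])$ guarantees that $W := F^{-1}(Z)$ is a smooth compact submanifold-with-boundary of $X \times [0,1]$, of dimension $\dim X + 1 - \mathrm{codim}(Z)$, which equals $1$ in the case relevant here (where $\dim X = \mathrm{codim}(Z)$). Its boundary is $\partial W = W \cap \bigl(X \times \{0,1\}\bigr) = f_0^{-1}(Z) \sqcup f_1^{-1}(Z)$. I would then orient $W$ via the standard convention induced from the orientations of $X \times [0,1]$, $Y$, and $Z$ through the preimage construction.

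Finally, the classification of compact $1$-manifolds-with-boundary asserts that $W$ is a disjoint union of finitely many circles and closed arcs. Circles contribute nothing to $\partial W$, so it suffices to analyze a single arc $\gamma \subset W$: its two endpoints sit on $\partial(X \times [0,1])$, and I would argue that the boundary orientation $W$ induces at these two endpoints, combined with the outward-normal sign convention (which flips between the faces $X \times \{0\}$ and $X \times \{1\}$), makes their orientation numbers cancel. Summing over all arcs and circles then gives $I(f_1, Z) - I(f_0, Z) = 0$, as desired.

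The main obstacle I anticipate is the orientation bookkeeping in the last step: one must verify that the orientation $W$ inherits from the preimage construction agrees, at every boundary point, with the orientation number assigned to that point as an intersection of $f_i$ with $Z$, and that the outward-normal convention on the top face is exactly opposite to that on the bottom face. This is what upgrades a trivial mod-$2$ point count into the signed cancellation needed for the proposition.
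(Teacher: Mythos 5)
The paper does not actually prove this proposition---it is imported verbatim from Guillemin and Pollack and used as a black box in the proof of Proposition 1.2---and your argument (perturb the homotopy rel boundary to be transversal to $Z$, observe that $W = F^{-1}(Z)$ is a compact oriented $1$-manifold with $\partial W = f_0^{-1}(Z) \sqcup f_1^{-1}(Z)$ carrying opposite boundary signs on the two endcaps, then invoke the classification of compact $1$-manifolds) is precisely the standard proof given in that reference. Your outline is correct, and you rightly identify the orientation bookkeeping as the only delicate step; in Guillemin--Pollack it is packaged as the lemma that the orientation numbers at the boundary of a compact oriented $1$-manifold sum to zero. One hypothesis worth making explicit: the compactness of $W$ in your step three comes from compactness of $X$ (so that $X \times [0,1]$ is compact and $W$ is closed in it); the paper states the proposition under ``at least one of $X$, $Z$ compact,'' but the cobordism argument as you have set it up is the $X$-compact case, which is the standing assumption in the relevant chapter of the cited textbook and the case actually used downstream.
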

 
\new{We will use this result to show that for a smoothly rotating ray, forming a homotopy, the intersection number can only change when the rotating ray passes a boundary of a surface. More generally,} we formulate our result:
 
\begin{proposition} \new{Let $X$ be an oriented manifold without boundary, and $M$ be a oriented submanifold with boundary of $Y$, $\text{dim}\; X + \text{dim} \; M = \text{dim}\; Y$. Let $f_0,f_1$ be smooth maps $X \rightarrow Y$, also $F$ be a homotopy between them, i.e., $F: [0,1] \times X \rightarrow Y, F(0) = f_0, F(1)=f_1$, and $f_0, f_1$ intersect $M$ transversally and for each $s, F(s)$ intersects the boundary $\partial M$ not more than once.
Then the oriented intersection number $I(F,M)$ only changes when $F(s,*)$ intersects the boundary $\partial M$, and only by increments of 1.}
\end{proposition}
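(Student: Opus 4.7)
The plan is to apply oriented intersection theory to the preimage $W = F^{-1}(Z)$ inside the cylinder $[0,1] \times X$. First I would invoke a parametric transversality argument to homotope $F$ (rel $\{0,1\} \times X$, on which $f_0, f_1$ are already transversal to $Z$) so that $F$ is simultaneously transversal to $Z$ and to $\partial Z$. Since $\dim X + \dim Z = \dim Y$, transversality of $F$ to $Z$ makes $W \subset [0,1] \times X$ a compact oriented $1$-manifold whose boundary splits as
\[
\partial W = \bigl(\{0\} \times f_0^{-1}(Z)\bigr) \,\sqcup\, \bigl(\{1\} \times f_1^{-1}(Z)\bigr) \,\sqcup\, F^{-1}(\partial Z),
\]
with orientations induced from those of $X$, $Y$, and $Z$ by the standard preimage convention.

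Next, let $\pi \colon W \to [0,1]$ be the projection onto the $t$-coordinate. For any regular value $t$ of $\pi$ at which $F(t,\cdot)$ avoids $\partial Z$, the fiber $\pi^{-1}(t)$ is a finite oriented $0$-manifold whose signed count equals $I(f_t, Z)$. The claim therefore reduces to analyzing how this signed count changes as $t$ sweeps through critical values of $\pi$. After a generic perturbation these critical values are isolated, and each falls into one of two types: (a) interior fold points of $W$, where $W$ has a vertical tangent and locally looks like $\{t = t^\star + s^2\}$; and (b) boundary points of $W$ lying in $F^{-1}(\partial Z)$, which are exactly the $t$-values at which $F(t,\cdot)$ meets $\partial Z$.

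To finish, for regular values $t_1 < t_2$ between which $\pi$ has no critical values, the slab $W \cap ([t_1,t_2] \times X)$ is a compact oriented $1$-manifold whose oriented boundary is $\pi^{-1}(t_2) - \pi^{-1}(t_1)$; since the signed boundary count of a compact oriented $1$-manifold vanishes, $I(f_t,Z)$ is constant on that interval. Crossing a type-(a) fold, two intersection points collide with opposite induced orientations, so the signed count is preserved; crossing a type-(b) point, a single oriented $0$-point is added or removed, changing $I$ by exactly $\pm 1$. The step I expect to require the most care is the orientation claim at fold points: it amounts to unpacking the preimage-orientation convention in local coordinates adapted to a vertical tangent of $W$ and verifying that the two branches contribute opposite signs to the fiber count, which is a direct but bookkeeping-heavy local computation.
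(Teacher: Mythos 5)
Your argument is correct in outline, but it takes a genuinely different route from the paper. The paper never touches the preimage $1$-manifold: it caps off $Z$ by a boundaryless ``closure'' $\overline{Z}\supset Z$ chosen so that $I(F,\overline{Z}\setminus Z)\equiv 0$, invokes the Guillemin--Pollack proposition that homotopic maps have equal intersection numbers to conclude that $I(f_t,Z)=I(f_t,\overline{Z})$ is constant on each $t$-subinterval avoiding $\partial Z$, and then handles the single crossing time $t_0$ by a local transversality count showing exactly one extra intersection appears on one side. You instead re-derive homotopy invariance from first principles: $W=F^{-1}(Z)$ is a compact oriented $1$-manifold, constancy of the fiber count between critical values of $\pi$ follows from the vanishing of the signed boundary count of a compact oriented $1$-manifold, interior folds cancel in oriented pairs, and the $\pm 1$ increments are precisely the boundary points in $F^{-1}(\partial Z)$, which the dimension count makes $0$-dimensional. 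Your version buys a structural explanation of the ``increments of $1$'' claim and avoids having to produce $\overline{Z}$ (whose existence with $I(F,\overline{Z}\setminus Z)\equiv 0$ the paper asserts but never constructs); the cost is the perturbation machinery and the orientation bookkeeping at folds that you correctly flag as the delicate step. One caveat shared by both arguments: each implicitly replaces $F$ by a generic representative (you perturb rel endpoints, the paper ``assumes'' transversality at $t_0$), so strictly the conclusion is established for an arbitrarily small perturbation of the given homotopy; this is harmless for the intended measure-zero application but worth stating explicitly.
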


 \begin{proof}. \new{We denote as $\overline{M}$ an arbitrary closure of $M$ such that the smooth manifold $Z = M \cup \overline{M}$ and $M \cap \overline{M} = \emptyset$. A transversal intersection on $Z$ may belong to $M$ or $\overline{M}$, thus, the intersection number $I(F, Z) = I(F,M) + I(F,\overline{M})$. By proposition~\ref{prop:base}, $I(F,Z)$ does not change for those values of $s$. }
 	
\new{Suppose that $F(s,X)$ intersects the boundary of $M$ only at $s=s_0$, maximum at one point. For this point, as the boundary is crossed to $s_0+\varepsilon$, the number of intersection points is no longer one, but still finite due to transversality, so $\overline{M}$ gains (or loses) this intersection, and since $I(F, Z)$ is constant, $I(F, M)$ loses (or gains) an intersection (Fig.~\ref{fig:proof_intuition}).  Therefore, the only value of $s$ where $I(F,M)$ can change is $s=s_0$ and the intersection number can only change by 1. Q.E.D.}
\end{proof}
\begin{figure}[H]
	\includegraphics[width=\linewidth]{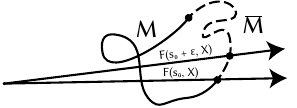}
	\caption{The intersection number $I(F,Z)$ remains unchanged. When $Z$ is decomposed into a manifold $M$ with boundary and a closure $\overline{M}$ the intersection number $I(F,M)$ only changes at the boundary. }
	\label{fig:proof_intuition}
\end{figure}

\new{The only technicality is that in the propositions we assumed boundaryless $X$, while $\mathbb{R}^+$ has a boundary $\{0\}$. This is not an actual limitation: our homotopy $F(s,X)$ never changes the value $F(s,0)$, so the proposition still stands.}

\section{Boundary Element Method Details}
\label{app:bem_details}

We follow the same notations and general setup as in Sec.~\ref{sec:ray_int}. 

For the flat 2D Laplace's equation, the fundamental solution is 
\begin{equation}
	G(\xi,\eta) = -\frac{1}{2\pi}\mathrm{ln} | |\xi-\eta| |.
	\label{eq:fsol_laplace}
\end{equation}

To compute the sign of an intersection, we look at the sign of the dot product between the normal of the surface at the intersection point, computed via differentiating Eq.~(\ref{eq:representation}), and the ray direction, e.g.: 

\begin{equation}
	\frac{\partial f}{\partial u} = \int _{\eta \in \Gamma} \frac{\partial G(\xi,\eta)}{\partial u} \frac{\partial f(\eta)}{\partial n} - \frac{\partial^2 G (\xi,\eta)}{\partial n \partial u} f(\eta) d\Gamma,
	\label{eq:drepresentation}
\end{equation}

where  $\frac{\partial G(\xi, \eta)}{\partial \xi}$ and $\frac{\partial^2 G (\xi,\eta)}{\partial n \partial \xi}$ are easily calculated in closed form 
(see Eq.~\ref{eq:fsol_laplace}).

\section{Sum-of-Squares Formulations}
\label{app:sos}
\subsection{Intersections of B\'ezier curve projections}
Here we outline the formulation for finding an intersection of two degree two B\'ezier curves, when projected onto a unit sphere via Sum-of-Squares (SOS) relaxation \cite{SOS}. This formulation can be easily adapted to degree 3 B\'ezier curves with a simple change of basis functions. We are following the notations of \cite{SOS2}.

In general, the problem of finding an intersection of spherical projections of two polynomial curves $f_1(t_1), f_2(t_2): [0,1] \rightarrow \mathbb{R}^3$ can be formulated as a following low-order polynomial optimization:

\begin{align}
	\begin{split}
 &\min _{t_1, t_2, R} t_1^2\\
 &f_1(t_1) R = f_2(t_2)\\
 &t_1, t_2 \in [0,1], R>0
 \end{split}
 \label{eq:bezier_sphere_int}
\end{align}

Concretely, for B\'ezier curves, basis functions are:
\begin{align}
	\begin{split}
		&\phi_1(t) := (1-t)^2\\
		&\phi_2(t) := 2t(1-t)\\
		&\phi_3(t) := t^2.
	\end{split}
\end{align}

Then we can define two B\'ezier curves as $f_1,f_2: [0,1] \rightarrow \mathbb{R}^3$, where $F_i, G_i \in \mathbb{R}^3$ are the control vertices, i.e., $f_1(t_1) = \sum_{i=1}^3 F_i \phi_i(t_1)$ and $f_2(t_2) = \sum_{i=1}^3 G_i \phi_i(t_2)$. Then Eq.~\ref{eq:bezier_sphere_int} has the following SOS relaxation:

\begin{align}
	\begin{split}
	&\gamma* = \left\{  \begin{array}{l}  \max_{\gamma,s_g,p_h} \gamma \\
		\textrm{s.t.} \;
		t_1^2 - \gamma - \sum_{g \in \mathcal{G}} s_g g - \sum_{h\in \mathcal{H}} p_h h \in \Sigma\\
		s_g \in \Sigma_{d_1}\\
		p_h \in \mathbb{R}[t]_{d_2}\\
	\end{array}  \right\} \\
    &\mathcal{G} = \{(1-t_1)t_1, (1-t_2)t_2, R\}\\
    &\mathcal{H} = f_1(t_1) R - f_2(t_2),
    \end{split}
\end{align}

The objective function $ t_1^2$ is only needed to pick out a particular intersection. Once an intersection is found with values $t'_1, t'_2$, we repeat the process after subdividing the curve at $t'_1$ and taking its $t_1 > t'_1$ chunk. 

Here $\Sigma_{d_1}$ are SOS polynomials of maximum degree $d_1$, and $\mathbb{R}[t]_{d_2}$ is a subset of polynomials of maximum degree $d_2$. This is a canonical SOS formulation, which can be solved by standard SDP solvers. In our implementation, we use YALMIP \cite{Lofberg2004,Lofberg2009}. We use $d_1=4, d_2=2$ for both quadratic and cubic B\'ezier curves.

Self-intersections of a spherical projection of a B\'ezier curve are implemented in exactly the same way by substituting $f_2=f_1$ and replacing $-t_1^2$ with $+(t_1-t_2)^2$. In this case, we ask to maximize the difference between $t_1$ and $t_2$. For this problem, solver does not indicate that a solution is not found if there is no intersection, so we need to check if the returned $t_1, t_2$ are sufficiently different; in our implementation it is $|t_1-t_2|>10^{-6}$.

\subsection{Ray Intersection with a B\'ezier Triangle}
Similarly, we can consider the problem of finding the first intersection of a ray with a cubic B\'ezier triangle. As done in \cite{SOS}, this can be trivially generalized to other degrees of B\'ezier triangles or to an arbitrary NURBS patch. 

A cubic B\'ezier triangle has 10 control vertices $F_i \in \mathbb{R}^3$ and is expressed as $f(u,v) = \sum_i \phi_i F_i$, where the basis functions $\phi_i$ are:

\begin{align}
	\begin{split}
		&\phi_1(u,v) = -(u + v - 1)^3\\
		&\phi_3(u,v) =  3u(u + v - 1)^2\\
		&\phi_5(u,v) = -6uv(u + v - 1)\\
		&\phi_7(u,v) = v^3\\
		&\phi_9(u,v) = 3u^2 v
	\end{split}
	\begin{split}
		&\phi_2(u,v) = 3v(u + v - 1)^2\\
		&\phi_4(u,v) = -3v^2 (u + v - 1)\\
		&\phi_6(u,v) = -3u^2 (u + v - 1)\\
		&\phi_8(u,v) = 3u v^2 \\
		&\phi_{10}(u,v) = u^3.
	\end{split}
\end{align}

The polynomial problem in Eq.~(\ref{eq:ray_bezier_int}) in the main document has the following SOS relaxation:

\begin{align}
	\begin{split}
		&\gamma* = \left\{  \begin{array}{l}  \max_{\gamma,s_g,p_h} \gamma \\
			\textrm{s.t.} \;
			t^2 - \gamma - \sum_{g \in \mathcal{G}} s_g g - \sum_{h\in \mathcal{H}} p_h h \in \Sigma\\
			s_g \in \Sigma_{d_1}\\
			p_h \in \mathbb{R}[t]_{d_2}\\
		\end{array}  \right\} \\
		&\mathcal{G} = \{u,v,1-u,1-v,t\}\\
		&\mathcal{H} = f(u,v) - O-tR.
	\end{split}
\end{align}

We found that low degrees of $d_1 = 4, d_2 = 2$ are sufficient for exact recovery.
\end{document}